\theoremstyle{plain}
\newtheorem{thm}{Theorem$\!$}
\newenvironment{theorem}
{\begin{thm}\hspace*{-1ex}{\bf.}}{\end{thm}}
\newtheorem{lem}[thm]{Lemma$\!$}
\newenvironment{lemma}{\begin{lem}\hspace*{-1ex}{\bf.}}{\end{lem}}
\newtheorem{prop}[thm]{Proposition$\!$}
\newtheorem{cor}[thm]{Corollary$\!$}
\newtheorem{defn}[thm]{Definition$\!$}
\newtheorem{xmpl}[thm]{Example$\!$}
\newenvironment{example}{\begin{xmpl}\hspace*{-1ex}{\bf.}}{\hfill$\Box$\end{xmpl}}
\newtheorem{cnstr}{Construction$\!$}
\newcounter{enumrom}
\renewcommand{\theenumrom}{(\roman{enumrom})}
\renewcommand{\@endtheorem}{\endtrivlist}
\renewcommand{\thefigure}{{\@arabic\c@figure}}
\renewcommand{\fnum@figure}{{\bf Figure\,\thefigure}}
\newcommand{\cE}{\mathcal{E}}
\newcommand{\cF}{\mathcal{F}}
\newcommand{\cG}{\mathcal{G}}
\newcommand{\cV}{\mathcal{V}}
\newcommand{\mathset}[1]{\left\{#1\right\}}
\newcommand{\abs}[1]{\left|#1\right|}
\newcommand{\parenv}[1]{\left( #1 \right)}
\newcommand{\be}[1]{\begin{equation}\label{#1}}
\newcommand{\ee}{\end{equation}}
\renewcommand{\leq}{\leqslant}
\renewcommand{\geq}{\geqslant}
\renewcommand{\Bbb}{\mathbb}
\newcommand{\Cref}[1]{Co\-ro\-lla\-ry\,\ref{#1}}
\renewcommand{\Bbb}{\mathbb}
\newcommand{\R}{{\Bbb R}}
\newcommand{\Z}{{\Bbb Z}}
\DeclareMathOperator{\inc}{In}
\DeclareMathOperator{\out}{Out}
\DeclareMathOperator{\val}{val}
\DeclareMathOperator{\tval}{\widetilde{val}}
\newcommand{\dinmax}{\delta}
\newcommand{\amax}{\alpha}
\outer\def\proclaim #1. #2\par{\medbreak
 \noindent{\bf#1.\enspace}{\sl#2\par}%
 \ifdim\lastskip<\medskipamount \removelastskip\penalty55\medskip\fi}
\mathchardef\inn="3232
\renewcommand{\in}{{\,\inn\,}}
\begin{document}


\title{\Huge\bf Quasi-linear Network Coding}

\author{
\IEEEauthorblockN{\textbf{Moshe Schwartz}}
\IEEEauthorblockA{Electrical and Computer Engineering\\
Ben-Gurion University of the Negev\\
Beer Sheva 8410501, Israel\\
{\it schwartz@ee.bgu.ac.il}\vspace{-2em}}
\and
\IEEEauthorblockN{\textbf{Muriel M\'edard}}
\IEEEauthorblockA{Research Laboratory of Electronics\\
Massachusetts Institute of Technology\\
Cambridge, MA 02139, USA\\
{\it medard@mit.edu}\vspace{-2em}}
}

\maketitle

\begin{abstract}
We present a heuristic for designing vector non-linear network codes
for non-multicast networks, which we call quasi-linear network codes.
The method presented has two phases: finding an approximate linear
network code over the reals, and then quantizing it to a vector
non-linear network code using a fixed-point representation. Apart from
describing the method, we draw some links between some network
parameters and the rate of the resulting code.
\end{abstract}

\section{Introduction}
\label{sec:intro}

Network coding was introduced in \cite{AhlCaiLiYeu00} as a means of
increasing the amount of information flowing through a network. In
this scheme, a network is a directed graph, where information is
generated by source nodes, and demanded by terminal nodes. All
participating nodes receive information through their incoming edges,
combine the information, and transmit it over their outgoing edges.

\emph{Linear} network coding has drawn particular interest due its
simplicity and structure. Works such as
\cite{AhlCaiLiYeu00,KoeMed03,HoMedKoeKarEffShiLeo06,JagSanChoEffEgnJaiTol05}
studied fundamental bounds on the parameters of such codes, mainly for
linear multicast networks. For these networks, necessary and
sufficient conditions for the existence of a linear solution are
known, as are efficient algorithms for finding such a solution. It was
also shown in \cite{LiYeuCai03}, that solvable multicast networks
always have a scalar linear solution. 

In the non-multicast case, the picture is more complicated. Several
works \cite{RasLeh04,MedEffKarHo03,Rii04} showed various restrictions
on the ability to find a linear solution to these networks. These
culminated in \cite{DouFreZeg05}, that showed linear network codes are
insufficient for solving a non-multicast network over finite fields,
commutative rings, and even $R$-modules. This was done using a single
network, for which no linear solution exists, though an ad-hoc
non-linear solution is possible. Certain classes of networks are known
to have linear solutions over finite fields for the non-multicast
case. The necessity, for the existence of a scalar linear solution
over a finite field, of the matroidal nature of a network was shown in
\cite{DouFreZeg07}, and the sufficiency proven in
\cite{KimMed10}. Recently, \cite{MurSun13} established the equivalence
between the discrete polymatroidal nature of a network and the
existence of a vector linear solution over a finite field for all
feasible non-multicast connections over that network.

The goal of this paper is to introduce \emph{quasi-linear network
  coding}, which is a heuristic method for designing a vector
non-linear network code for non-multicast networks. The method is
inspired by work on real network coding of
\cite{ShiKatJagDeyKatMed08}. The method we suggest has two main
phases.  In the first one, an approximate solution for the network is
found over the reals. We say this is an \emph{approximate} solution,
since at the terminals, the original messages are not recovered
exactly, and there is some mixing with unwanted messages. At the
second stage, restrictions over possible source messages, together
with a fixed-point representation, enable the terminals to reconstruct
the demanded source messages with zero error. We thus gain from both
worlds: the method is linear at its core, giving it some structure,
while at the second phase non-linearity is introduced in a systematic
way, overcoming the insufficiency of linear solutions to non-multicast
networks.

Comparing this work with \cite{ShiKatJagDeyKatMed08}, we note that
both works try to solve the network over the reals. However,
\cite{ShiKatJagDeyKatMed08} considers only the multicast case, and it
assumes an exact solution exists. Furthermore, the real coefficients
in \cite{ShiKatJagDeyKatMed08} are used to obtain a graceful
degradation. This is inherently different in the non-multicast case
that we consider, since an exact solution is not guaranteed, in which
case we use the real coefficients to obtain an approximate solution.

This paper is organized as follows. In Section \ref{sec:pre} we
introduce the required definition and notation. In Section
\ref{sec:method} we describe quasi-linear network codes,
discuss some of their properties, and give examples. We conclude in
Section \ref{sec:conc} with a brief summary and some open questions.

\section{Preliminaries}
\label{sec:pre}

For the purpose of this work, a \emph{network} is a directed acyclic
graph $\cG=(\cV,\cE)$. For a vertex $v\in \cV$ we denote its incoming
edges as $\inc(v)$, and its outgoing edges as $\out(v)$. The in-degree
of vertex $v\in\cV$ is defined as $\delta(v)=\abs{\inc(v)}$, and the
maximal in-degree in the graph is denoted as
\[\dinmax=\max_{v\in\cV}\delta(v).\]

The \emph{depth} of the graph, denoted $d$, is defined as the length
of the longest path in the graph. Since the graph is acyclic, the
depth is well defined.

Two distinguished subsets of vertices are the \emph{source nodes} and
the \emph{terminal nodes}, denoted by $S,T\subseteq\cV$
respectively. The source nodes generate \emph{messages}, which are
symbols from some finite alphabet $\Sigma$. With each of the terminal
nodes we associate a demand for a subset of the messages.

Information is transmitted over edges in the form of symbols from
$\Sigma$.  We denote this transmitted information as $\val(e)\in
\Sigma$ for each $e\in\cE$.  Apart from the source nodes, each node
$v\in\cV$, transmits information along its outgoing edges, which is a
function of the information received on the incoming edges to
$v$. More precisely, for each $v\in \cV\setminus S$, and for each
$e\in\out(v)$,
\[\val(e)=f_e\parenv{ e_1,e_2,\dots,e_{\delta(v)} },\]
where $\inc(v)=\{e_1,\dots,e_{\delta(v)}\}$, and $f_e$ is a function
associated with the edge $e$. Here we implicitly assume a fixed order
of the edges in $\inc(v)$, since $f_e$ is not necessarily a symmetric
function.

Throughout this paper we assume a single source node, i.e.,
$S=\mathset{s}$. The terminal nodes are denoted
$T=\mathset{t_1,t_2,\dots,t_\ell}$. To avoid trivialities, since the
graph is acyclic and there is only one source node, we can assume $s$
is the only vertex in $\cG$ with no incoming edges.  Thus, the
outgoing edges of $s$ transmit the messages of $s$.

The case we study is a general \emph{non-multicast network}, where each
of the terminal nodes demands a subset of the source messages. The subsets
are not necessarily disjoint.

We say a vertex $v\in\cV$ is of depth $d(v)$ if the \emph{longest}
path from $s$ to $v$ is of length $d$. In a similar manner, the depth
of an edge $e\in\cE$, $e=v\to v'$, is defined as the depth of $v$, and
we denote $d(e)=d(v)$. We can now partition the edge set
\[\cE=\cE_0\cup\cE_1\cup\dots\cup\cE_{d-1}.\]
By definition, $e\in\cE_{d(e)}$ for each $e\in\cE$. We note that
$\cE_0=\out(s)$. Again, in order to avoid trivialities, we assume
$\cE_{d-1}$ contains at least one edge ending in a terminal node.

\section{Method and Analysis}
\label{sec:method}

The quasi-linear network-coding method we describe is inspired by the
arithmetic network coding of \cite{ShiKatJagDeyKatMed08}. While the latter work considered  multicast setting, we consider the general non-multicast which does subsume the multicast case. The main strategy
is given by the following two steps:
\begin{enumerate}
\item
  Initially, instead of using a finite field for the alphabet of
  messages, we use real numbers. Nodes linearly combine the real
  scalars on incoming edges using real coefficients. The coefficients are
  chosen so as to approximate the demands at the terminal nodes.
\item
  Messages from the source are restricted to integers.  The messages
  over the edges are replaced with finite-precision fixed-point
  representations in base $b$. The degree of approximation to the
  demands, calculated in the first step, is used to limit the range of
  integers the source may send. The terminal nodes reverse the linear
  combination and quantize the result to the nearest integer, as an
  estimate to the demanded messages.
\end{enumerate}

We observe that at the terminal nodes there are two sources of noise
that may prevent the recovery of the correct demanded messages. The
first is due to the numerical error accumulating along the paths from
the source to the terminal, caused by the restriction to fixed-point
precision. This occurs even if we assume internal computations within
the node are done with infinite precision\footnote{There is no actual need for infinite precision. We can choose a precision high enough within nodes to make internal node computations irrelevant.}.

The second source of noise at the terminals is due to the
approximation to the demand. Terminal nodes essentially compute a
linear combination of the messages from the source. Ideally, this
combination has a coefficient of $1$ for the demanded message, and a
coefficient of $0$ for each of the other messages. However, such a
solution may not be possible, as was demonstrated in
\cite{DouFreZeg05}. We shall therefore strive to obtain a linear
combination with coefficients close to $1$ and $0$
appropriately. Intuitively, such combinations introduce a ``weak''
version of unwanted messages. By limiting the range of messages the
source transmits, these ``weak'' versions of unwanted messages lead to interference that is removed by quantization. Thus, when the terminal recovers the correct
integer message when quantizing the linear combination to the nearest
integer.

\subsection{Stage I -- Working over $\R$}

We now describe the method in detail, and analyze some of its
properties. Let $\cG=(\cV,\cE)$ be a non-multicast network, with
source node $s$ and terminal nodes $T=\mathset{t_1,\dots,t_\ell}$. Let
$k=\abs{\out(s)}$ be the number of outgoing edges from the source $s$,
and let $m_1,m_2,\dots,m_k\in\Z$ be the messages the source transmits,
each on one of its outgoing edges, $\out(s)=\cE_0$. We conveniently
denote $\out(s)=\mathset{e^s_1,e^s_2,\dots,e^s_k}$, and thus,
$\val(e^s_i)=m_i$ for all $i\in [k]$.

For the first stage of the design of the algorithm, assume that over
the rest of the edges, $\cE\setminus\cE_0$, values from $\R$ are
transmitted. We denote the value transmitted over edge $e\in\cE$ as
$\val(e)$. Every vertex in the graph transmits, over its
outgoing edges, linear combinations of values received from the
incoming edges. To be more precise, for every $v\in\cV$, and for every
$e'\in\out(v)$, the value transmitted over $e'$ is the linear
combination
\[\val(e')=\sum_{e\in\inc(v)}\alpha_{e\to e'}\val(e),\]
for some fixed coefficients $\alpha_{e\to e'}\in \R$. For our
convenience, if $e\to e'$ is not a path in $\cG$, we shall set
$\alpha_{e\to e'}=0$.

As is usually done in network coding, let $T$ be the $\abs{\cE}\times\abs{\cE}$
matrix of the line-graph of $\cG$, with
\[T_{e,e'}=\alpha_{e\to e'}.\]
It is well-known \cite{KoeMed03} that by running the network-coding
system, the value transmitted over any edge $e\in \cE$ is given by
\[\val(e)=\sum_{i=1}^{k} m_i \parenv{\sum_{j=0}^{d} T^j}_{e^s_i,e}\]
where $T^0=I$, the identity matrix.

For the sake of brevity and ease of notation, let us assume each of
the terminals demands a single message from the source. The more
general case is a trivial extension of the case we describe. Say
terminal $t\in T$ demands the $w_t$th source message, i.e., $m_{w_t}$,
where $w_t\in [k]$. To that end, the terminal $t$ chooses $\delta(t)$
real coefficients, $\beta_{t,1},\dots,\beta_{t,\delta(t)}$, each associated
with the $\delta(t)$ incoming edges, denoted $e'_1,\dots,e'_{\delta(t)}$. The
terminal node $t$ then performs the linear combination
\begin{equation}
\label{eq:atterm}
\sum_{r=1}^{\delta(t)} \beta_{t,r} \val(e'_r)=
\sum_{i=1}^{k} m_i \sum_{r=1}^{\delta(t)} \parenv{\sum_{j=0}^{d} T^j}_{e^s_i,e'_r}.
\end{equation}
We denote
\begin{equation}
\label{eq:gamma}
\gamma_{t,i}=\sum_{r=1}^{\delta(t)} \parenv{\sum_{j=0}^{d} T^j}_{e^s_i,e'_r},
\end{equation}
and then we can rewrite \eqref{eq:atterm} as
\begin{equation}
\label{eq:rewrite}
\sum_{r=1}^{\delta(t)} \beta_{t,r} \val(e'_r) = \sum_{i=1}^{k} \gamma_{t,i}m_i .
\end{equation}
Since terminal $t$ demands the $w_t$th message, ideally we would like
to get $\gamma_{t,w_t}=1$ and $\gamma_{t,i}=0$ for all $i\neq
w_t$. This goal may be unattainable, in particular, since we need to
solve this concurrently for all $t\in T$.

We therefore resort to try and find an approximate solution over the reals,
as follows. We define the function
\[\cF = \sum_{t\in T}\parenv{ \parenv{\gamma_{t,w_t}-1}^2 + \sum_{\substack{i\in[\delta(t)] \\ i\neq w_t}}\gamma_{t,i}^2},\]
where $\gamma_{t,i}$ is defined in \eqref{eq:gamma}.  By choosing the
real coefficients $\alpha_{e\to e'}$ and $\beta_{t,i}$, the goal is to
minimize $\cF$. 

As a crude overall measure of approximation, we define $\gamma$ as
\[\gamma = \max_{t\in T}\parenv{\abs{\gamma_{t,w_t}-1} + \sum_{\substack{i\in[\delta(t)] \\ i\neq w_t}}\abs{\gamma_{t,i}}}.\]
Intuitively, $\gamma$ is the maximal magnitude of deviation from the
coefficients being $1$ or $0$ appropriately. We say the real solution
to the network is \emph{exact} if $\gamma=0$. This concludes the first
stage of designing a quasi-linear network code.

\subsection{Stage II -- Working with Fixed-Point Precision}

The goal of the second stage of the design of quasi-linear network
codes, is to quantize all the real numbers transmitted over edges to a
fixed-point presentation in base $b$. By doing so, we enable the
transmission of real values as symbols from a finite alphabet, but
also introduce more noise into the system. In this section we go
through this quantization process, and bound the amount of noise
introduced. This will be helpful in determining the range of possible
messages that can be recovered with zero error at the terminals. It
should be noted that the bounds we give are general but crude, and
that for specific networks, a careful analysis in the spirit of this
section, will provide better bounds.

We denote the maximal linear-combination coefficient, chosen in the
previous stage, as
\[\amax=\max_{e,e'\in\cE}\abs{\alpha_{e\to e'}}.\]
We assume $\alpha > 1$. If that is not the case, either the solution
is a trivial routing, or we can scale all the coefficients and receive
a scaled version of the result at the terminals.  We also assume that
the magnitude of all source messages is upper bounded by
\[ \abs{m_i}\leq M.\]
Using our partition of the edges by depth, let us denote
\[M_i=\max_{e\in \cE_i}\abs{\val(e)}.\]

\begin{lemma}
\label{lem:mmax}
For all $0\leq i\leq d-1$ we have
\[M_i\leq (\dinmax\amax)^i M.\]
\end{lemma}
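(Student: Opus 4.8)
The plan is to induct on the depth index $i$, using the recursive transmission rule for $\val$ together with the structural fact that the edges feeding a vertex all have strictly smaller depth. For the base case $i=0$, recall that $\cE_0=\out(s)$ and $\val(e^s_j)=m_j$, so that $M_0=\max_{j\in[k]}\abs{m_j}\le M=(\dinmax\amax)^0 M$, giving the claim at depth $0$.

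For the inductive step I would assume the bound holds at all depths up to $i$ and take an arbitrary edge $e'\in\cE_{i+1}$, say $e'=v\to v'$ so that $d(v)=i+1$. The transmission rule gives $\val(e')=\sum_{e\in\inc(v)}\alpha_{e\to e'}\val(e)$. The key observation is that every incoming edge $e=u\to v$ satisfies $d(e)=d(u)\le i$: since $u\to v$ is an edge, any longest path to $u$ extends to a path to $v$ that is one edge longer, so $d(v)\ge d(u)+1$ and hence $d(u)\le i$. Consequently each term $\val(e)$ on the right-hand side is an edge value at depth at most $i$, so by the inductive hypothesis $\abs{\val(e)}\le M_{d(e)}\le(\dinmax\amax)^{d(e)}M$. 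Applying the triangle inequality, noting that there are $\delta(v)\le\dinmax$ incoming edges, that each coefficient satisfies $\abs{\alpha_{e\to e'}}\le\amax$, and that $(\dinmax\amax)^{d(e)}\le(\dinmax\amax)^i$ because $\dinmax\amax\ge1$, I obtain
\[
\abs{\val(e')}\le\sum_{e\in\inc(v)}\abs{\alpha_{e\to e'}}\abs{\val(e)}\le\delta(v)\cdot\amax\cdot(\dinmax\amax)^i M\le(\dinmax\amax)^{i+1}M.
\]
Taking the maximum over all $e'\in\cE_{i+1}$ then yields $M_{i+1}\le(\dinmax\amax)^{i+1}M$, closing the induction.

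The step needing the most care is the structural claim that incoming edges drop the depth, since the whole collapse of the three nested bounds (count $\dinmax$, coefficient $\amax$, and value $(\dinmax\amax)^i M$) into a single clean factor of $\dinmax\amax$ depends on every predecessor edge lying in $\cE_0\cup\dots\cup\cE_i$ rather than at depth $i+1$ or higher; this is exactly where the acyclicity and the longest-path definition of depth are used. The secondary point to verify is the monotonicity $(\dinmax\amax)^{d(e)}\le(\dinmax\amax)^i$, which requires $\dinmax\amax\ge1$ and is guaranteed by the standing assumption $\amax>1$ together with $\dinmax\ge1$. Both points are elementary, so I expect no genuine obstacle beyond stating the depth-monotonicity observation precisely.
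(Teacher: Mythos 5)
Your proof is correct and follows essentially the same induction-on-depth argument as the paper: same base case, same use of the triangle inequality with the bounds $\delta(v)\le\dinmax$, $\abs{\alpha_{e\to e'}}\le\amax$, and the monotonicity of $(\dinmax\amax)^{j}$ granted by $\dinmax\amax\ge 1$. The only difference is that you make explicit the observation that incoming edges have strictly smaller depth, which the paper uses implicitly via $d(e)\le d(e')-1$.
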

\begin{IEEEproof}
This is a simple proof by induction. For the induction base we have
for each $e\in\cE_0$
\[\abs{\val(e)}\leq M = (\dinmax\amax)^0 M.\]
Since this is true for all edges $e\in\cE_0$ we have
\[M_0\leq (\dinmax\amax)^0 M.\]

For the induction step, let $e'=v\to v'$ be some edge, $e'\in
\cE_i$. Then
\begin{align*}
\abs{\val(e')} &= \abs{\sum_{e\in\inc(v)}\alpha_{e\to e'}\val(e)}
\leq \sum_{e\in\inc(v)}\abs{\alpha_{e\to e'}\val(e)}\\
& \leq \sum_{e\in\inc(v)}\amax M_{d(e)}
\leq \sum_{e\in\inc(v)}\amax (\dinmax\amax)^{d(e)}M\\
& \leq \sum_{e\in\inc(v)}\amax (\dinmax\amax)^{d(e')-1}M
= (\dinmax\amax)^i M,
\end{align*}
where we used the fact that $\amax\dinmax\geq 1$.
Since this holds for any $e\in\cE_i$ we have
\[M_i \leq (\dinmax\amax)^i M.\]
\end{IEEEproof}

Suppose now every edge can carry a value in fixed-point base-$b$
representation with $P$ digits left of the fixed point, and $p$ digits
to the right of it. The nodes calculate the same linear
combinations on their inputs as before, and we assume infinite precision
within the nodes. However, before transmitting the results over the outgoing
messages, a quantization occurs.

We denote the value of the fixed-point representation sent over edges
as $\tval(e)$. For all $0\leq i\leq d-1$ we denote
\[\epsilon_i=\max_{e\in\cE_i}\abs{\tval(e)-\val(e)}.\]

\begin{lemma}
\label{lem:epsmax}
For all $0\leq i\leq d-1$ we have
\[\epsilon_i\leq 
\frac{\parenv{\dinmax\amax}^i-1}
{\dinmax\amax-1}b^{-p}.
\]
\end{lemma}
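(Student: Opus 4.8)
The plan is to mirror the inductive structure of \Lref{lem:mmax}, but now to track the accumulated quantization error rather than the magnitude of the transmitted value. The first step is to establish a one-step recurrence for $\epsilon_i$ in terms of $\epsilon_{i-1}$. Fix an edge $e'=v\to v'$ with $e'\in\cE_i$. The node $v$ computes, with infinite internal precision, the linear combination $c=\sum_{e\in\inc(v)}\alpha_{e\to e'}\tval(e)$, and then rounds it to a base-$b$ fixed-point value, producing $\tval(e')$. I would split the total error into a \emph{propagated} part and a \emph{fresh} part by the triangle inequality:
\[
\abs{\tval(e')-\val(e')}\leq \abs{c-\val(e')}+\abs{\tval(e')-c}.
\]
The fresh rounding error $\abs{\tval(e')-c}$ is at most $b^{-p}$, the resolution of the representation. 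For the propagated part I use $\val(e')=\sum_{e\in\inc(v)}\alpha_{e\to e'}\val(e)$ to get $\abs{c-\val(e')}\leq\sum_{e\in\inc(v)}\abs{\alpha_{e\to e'}}\abs{\tval(e)-\val(e)}\leq\dinmax\amax\,\epsilon_{i-1}$, exactly the estimate of \Lref{lem:mmax} with $\abs{\tval(e)-\val(e)}$ in place of $\abs{\val(e)}$ and using $\abs{\alpha_{e\to e'}}\leq\amax$ together with $\abs{\inc(v)}=\delta(v)\leq\dinmax$.

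The second step is the base case and the summation. At depth $0$ the edges in $\cE_0=\out(s)$ carry the integer source messages, which are represented exactly, so $\epsilon_0=0$; this matches the claimed bound, whose right-hand side vanishes at $i=0$. Combining the base case with the recurrence $\epsilon_i\leq\dinmax\amax\,\epsilon_{i-1}+b^{-p}$ and unrolling gives the geometric sum $\epsilon_i\leq b^{-p}\sum_{j=0}^{i-1}(\dinmax\amax)^j$, which equals $\frac{(\dinmax\amax)^i-1}{\dinmax\amax-1}b^{-p}$. I would write this as a single induction on $i$: assuming the stated bound at all smaller depths, substitute into the recurrence and simplify using the identity $\dinmax\amax\cdot\frac{(\dinmax\amax)^{i-1}-1}{\dinmax\amax-1}+1=\frac{(\dinmax\amax)^i-1}{\dinmax\amax-1}$.

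The only point needing care in the recurrence step is that the incoming edges $e\in\inc(v)$ need not all lie in $\cE_{i-1}$; they satisfy only $d(e)\leq i-1$. I would resolve this by invoking the inductive hypothesis at each depth $d(e)$ and using monotonicity of the bound $\frac{(\dinmax\amax)^{j}-1}{\dinmax\amax-1}b^{-p}$ in $j$, which holds since $\dinmax\amax\geq 1$, so that each term is dominated by the value at $j=i-1$. This bookkeeping is the only genuine subtlety; everything else is a routine geometric summation of the same flavor as the preceding lemma, so I expect no real obstacle beyond carefully separating the propagated error from the freshly introduced rounding error.
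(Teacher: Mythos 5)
Your proposal is correct and follows essentially the same route as the paper's proof: an induction on depth with base case $\epsilon_0=0$, a one-step bound separating the fresh rounding error $b^{-p}$ from the propagated error $\dinmax\amax$ times the previous bound, and the same use of monotonicity of the bound $\frac{(\dinmax\amax)^j-1}{\dinmax\amax-1}b^{-p}$ in $j$ to handle incoming edges of depth strictly less than $i-1$. Your explicit triangle-inequality split into propagated and fresh parts is in fact a slightly cleaner writing of the paper's first (in)equality.
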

\begin{IEEEproof}
For convenience, let us denote the RHS of the claim as $f(i)$. We
observe that $f(i)$ is non-negative and monotone increasing in $i$.

The proof is by induction. For the induction base we note that
$\epsilon_0=0$ since, in our setting, the source node transmits only
integer values, and we shall make sure to set $P$ to a large enough
value so that no truncation error occurs.

For the induction step consider any edge $e'\in \cE_i$. We now have
\begin{align*}
&\abs{\tval(e')-\val(e')} =\\
&\quad = b^{-p} +\abs{\sum_{e\in\inc(v)}\alpha_{e\to e'}\tval(e)-\alpha_{e\to e'}\val(e)}\\
&\quad \leq b^{-p}+ \sum_{e\in\inc(v)}\abs{\alpha_{e\to e'}\epsilon_{d(e)}}\\
&\quad \leq b^{-p}+ \sum_{e\in\inc(v)}\abs{\alpha_{e\to e'}f(d(e))}\\
&\quad \leq b^{-p}+ \sum_{e\in\inc(v)}\amax f(i-1)\\
&\quad \leq b^{-p}+ \dinmax\amax f(i-1)\\
&\quad = f(i).
\end{align*}
Since this holds for any $e'\in\cE_i$ we have 
\[\epsilon_i\leq f(i).\]
\end{IEEEproof}

We are now in a position to combine all of the previous observations,
and give sufficient conditions for a zero-error recovery of the values
at the terminals.

\begin{theorem}
\label{th:prec}
Using the quasi-linear network-coding scheme describe before, it is
possible to recover the original values at the terminals if
\begin{align*}
  M & < \frac{1}{2\gamma} \qquad \text{(only if $\gamma>0$),}\\
  p & >  \log_b \frac{\parenv{\dinmax\amax}^{d-1}-1}
{\dinmax\amax-1} - \log_b\parenv{\frac{1}{2}-\gamma M}, \\
  P &\geq \log_b \parenv{2(\dinmax\amax)^{d-1}M+2}.
\end{align*}
\end{theorem}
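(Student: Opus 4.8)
The plan is to bound, at each terminal, the total deviation of the quantity it actually decodes from the integer message it demands, and then to force this deviation strictly below $\tfrac12$, so that rounding to the nearest integer recovers the message exactly. Fix a terminal $t$ and let $\hat y_t=\sum_{r=1}^{\delta(t)}\beta_{t,r}\tval(e'_r)$ be the value it forms from the fixed-point symbols on its incoming edges, which it then rounds. Using \eqref{eq:rewrite}, I would split the error into the two noise sources identified earlier,
\[
\hat y_t - m_{w_t}=\Bigl(\sum_{i=1}^{k}\gamma_{t,i}m_i-m_{w_t}\Bigr)+\sum_{r=1}^{\delta(t)}\beta_{t,r}\bigl(\tval(e'_r)-\val(e'_r)\bigr),
\]
an \emph{approximation} term coming from $\gamma\neq 0$, and a \emph{quantization} term coming from the fixed-point truncation.

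First I would bound the approximation term. Since $\gamma_{t,w_t}m_{w_t}-m_{w_t}=(\gamma_{t,w_t}-1)m_{w_t}$ and $\abs{m_i}\le M$, the triangle inequality gives $\abs{\sum_i\gamma_{t,i}m_i-m_{w_t}}\le M\bigl(\abs{\gamma_{t,w_t}-1}+\sum_{i\neq w_t}\abs{\gamma_{t,i}}\bigr)\le\gamma M$ directly from the definition of $\gamma$. For the quantization term, every incoming edge of $t$ lies in $\cE_0\cup\dots\cup\cE_{d-1}$, so each $\abs{\tval(e'_r)-\val(e'_r)}$ is at most $\epsilon_{d-1}$ by monotonicity, and Lemma~\ref{lem:epsmax} bounds $\epsilon_{d-1}$ by $\frac{(\dinmax\amax)^{d-1}-1}{\dinmax\amax-1}b^{-p}$. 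Combining, the requirement $\abs{\hat y_t-m_{w_t}}<\tfrac12$ is implied by $\gamma M+\frac{(\dinmax\amax)^{d-1}-1}{\dinmax\amax-1}b^{-p}<\tfrac12$. This is feasible only when $\tfrac12-\gamma M>0$, which is exactly the hypothesis $M<\tfrac{1}{2\gamma}$ (vacuous when $\gamma=0$); rearranging to $\frac{(\dinmax\amax)^{d-1}-1}{\dinmax\amax-1}b^{-p}<\tfrac12-\gamma M$ and taking $\log_b$ reproduces precisely the stated lower bound on $p$.

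It remains to guarantee that the fixed-point format with $P$ integer digits never overflows. Here I would invoke Lemma~\ref{lem:mmax}: every true edge value satisfies $\abs{\val(e)}\le M_{d-1}\le(\dinmax\amax)^{d-1}M$, and the quantized value differs from it by at most $\epsilon_{d-1}<\tfrac12$, so $\abs{\tval(e)}<(\dinmax\amax)^{d-1}M+\tfrac12$. Requiring $b^{P}$ to exceed twice this bound, to accommodate both signs with a little slack, yields $P\ge\log_b\parenv{2(\dinmax\amax)^{d-1}M+2}$; the same estimate covers the decoded $\hat y_t$, which is even smaller, and makes the standing assumption $\epsilon_0=0$ legitimate.

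The step I expect to require the most care is the accounting of the quantization term at the terminal: the decoding combination with the $\beta_{t,r}$ must be controlled without the coefficients reintroducing a factor of $\dinmax\amax$, so that the clean bound $\epsilon_{d-1}$ can be used as stated above. I would treat this crudely, consistent with the paper's stated aim of general-but-loose bounds, and note that a network-specific analysis tracking the actual $\beta_{t,r}$ and the true depths of the edges entering $t$ would sharpen both the $p$ and the $P$ conditions.
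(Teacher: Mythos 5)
Your proposal is correct and follows essentially the same route as the paper's proof: decompose the terminal error into the approximation term (bounded by $\gamma M$ via the definition of $\gamma$) and the quantization term (bounded via Lemma~\ref{lem:epsmax}), force the sum below $\tfrac12$ so that rounding succeeds, and use Lemma~\ref{lem:mmax} to set the overflow condition on $P$. The caveat you flag --- that the terminal's coefficients $\beta_{t,r}$ could in principle amplify the edge quantization errors beyond $\epsilon_{d-1}$ --- is a genuine looseness, but the paper's own proof silently makes the same crude identification, so your treatment matches it.
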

\begin{proof}
We first note that at any terminal $t\in T$, the absolute difference between
the demanded message and the linear combination obtained by
\eqref{eq:rewrite} and Lemma \ref{lem:epsmax}, is upper bounded by
\[\abs{m_{w_t}-\sum_{i=1}^{k} \gamma_{t,i}m_i}+\epsilon_{d-1}\leq
\gamma M+ \frac{\parenv{\dinmax\amax}^{d-1}-1}
{\dinmax\amax-1}b^{-p}.\]
Thus, if we require
\[\gamma M+ \frac{\parenv{\dinmax\amax}^{d-1}-1}
{\dinmax\amax-1}b^{-p} < \frac{1}{2},\]
then rounding the resulting linear combination at terminal $t$ to the
nearest integer, will recover the message correctly.

It follows that when $\gamma>0$, i.e., the real solution in the first stage
is not exact, we must require
\[\gamma M< \frac{1}{2}.\]
Furthermore, after rearranging and solving for $p$, we obtain the
desired requirement,
\[p >  \log_b \frac{\parenv{\dinmax\amax}^{d-1}-1}
{\dinmax\amax-1} - \log_b\parenv{\frac{1}{2}-\gamma M}.\]

Furthermore, according to Lemma \ref{lem:mmax}, the maximal value sent
on an edge is upper bounded by
\[M_{d-1}=(\dinmax\amax)^{d-1}M.\]
Since, in the previous paragraphs, we bounded the quantization error
on any edge by $\frac{1}{2}$, and since we need the integers in the
range $[-M_{d-1}-\frac{1}{2},M_{d-1}+\frac{1}{2}]$, taking $P \geq
\log_b \parenv{2(\dinmax\amax)^{d-1}M+2}$ digits to the left of the
fixed point ensures the value is within the representation range.
\end{proof}

We briefly pause to contemplate the implications of Theorem
\ref{th:prec}.  If, in the first stage, we are able to find an
\emph{exact} real solution, i.e., with $\gamma=0$, then there is no
bound on the magnitude of the source messages sent. It follows that,
in this case, the number of digits used when transmitting over any
edge, $P+p$, is $\log_b M+ c$, where $c$ is some constant that depends
on the network topology. Thus, we can get quasi-linear network-coding
solutions with rate arbitrarily close to $1$, where rate is measured
as the ration between the minimum number of bits required to describe
a source message, and the number of bits used for transmission over
any edge.

When we do not have an exact real solution in the first stage, we can
no longer support arbitrarily-large source messages. Furthermore, we
note that the bound on $p$ from Theorem \ref{th:prec} distinctly
exhibits a component affected by the degree of approximation $\gamma$,
and a component affected by the fixed-point quantization.

\begin{example}
The network $\cG_1$, shown in Figure \ref{fig:g1}, was given in
\cite{DouFreZeg05} as part of a larger network. The network has a
source node $s$ that produces five source messages
$m_1,\dots,m_5$. There are seven terminal nodes, with single-message
demands written beneath the appropriate node.

It was shown in \cite{DouFreZeg05} that $\cG_1$ has no scalar linear
network-coding solution over $GF(2^h)$, for any $h$. If we restrict
ourselves to a routing solution, in which nodes cannot linearly
combine incoming messages, the best rate we can achieve is
$\frac{1}{3}$, as was shown in \cite{CanDouFreZeg06}.

However, the network is exactly solvable over $\R$ in the following
simple manner. All source nodes repeat their message on all outgoing
edges. All internal nodes sum all incoming messages and transmit the
sum over all the outgoing edges. The terminal nodes perform simple
subtraction to obtain their demands, except for the middle terminal
node, $v_4$, which computes $\frac{1}{2}(-v_1+v_2+v_3)$. It easily
follows that given source messages of $n$ bits, we can find a
quasi-linear network-coding solution with $P=n+2$ and $p=0$ digits to
the left and to the right of the fixed point, respectively. Since $n$
is arbitrarily large, the achievable rate is $\frac{n}{n+2}$, which is
asymptotically $1$ as $n\to\infty$.
\end{example}

\begin{figure}
\begin{center}
\psfrag{m1}{$m_1$}
\psfrag{m2}{$m_2$}
\psfrag{m3}{$m_3$}
\psfrag{m4}{$m_4$}
\psfrag{m5}{$m_5$}
\psfrag{v1}{$v_1$}
\psfrag{v2}{$v_2$}
\psfrag{v3}{$v_3$}
\psfrag{v4}{$v_4$}
\psfrag{ss}{$s$}
\includegraphics[scale=0.7]{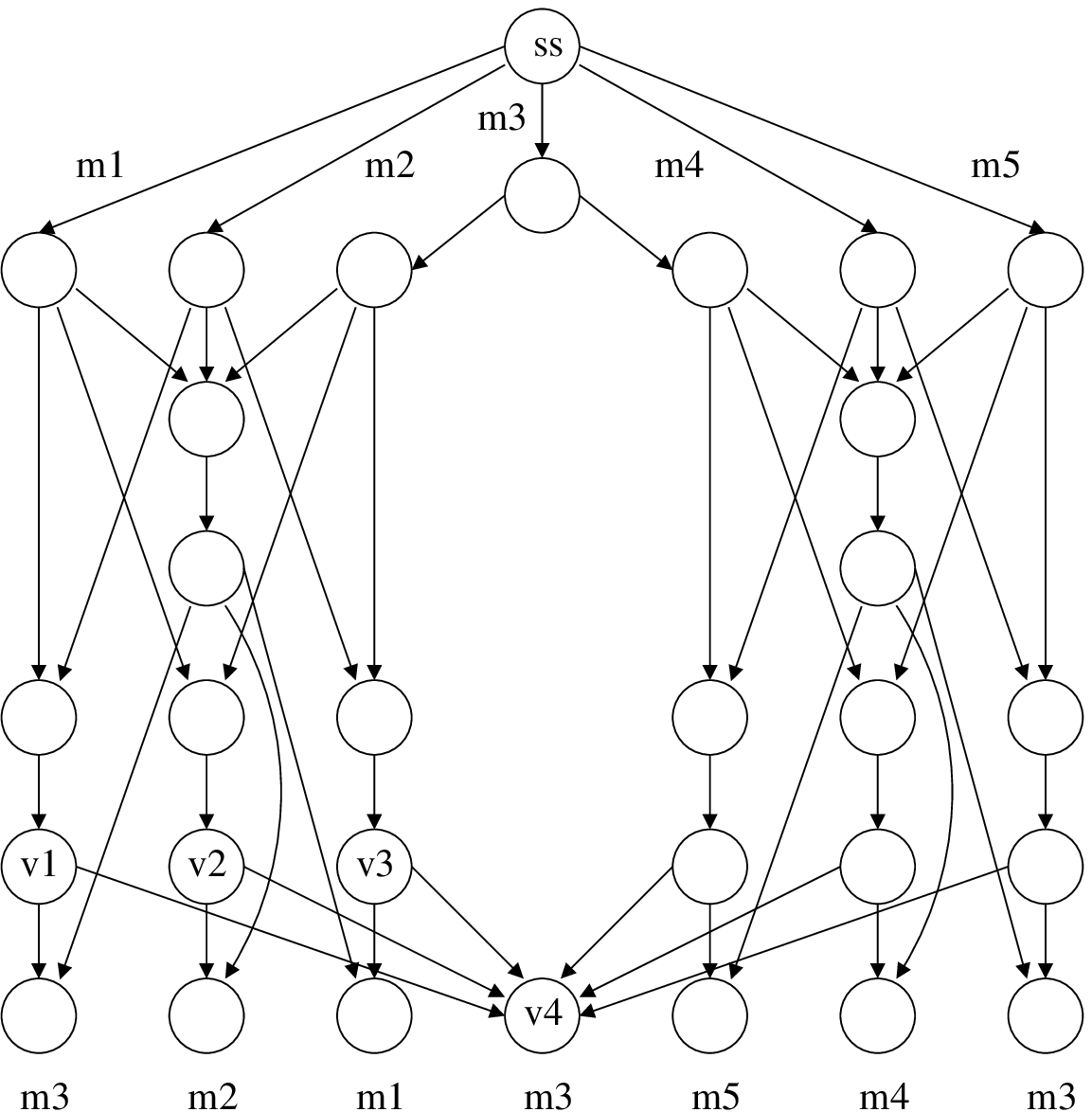}
\end{center}
\caption{The network $\cG_1$.}
\label{fig:g1}
\end{figure}

\begin{example}
A more interesting example is the network $\cG_2$, shown in Figure
\ref{fig:g2}, which was also given in \cite{DouFreZeg05}. The network
has a single source $s$, which produces three message $m_1$, $m_2$,
and $m_3$. The network also has three terminal nodes, whose demands
are written below them.

It was shown in \cite{DouFreZeg05}, that $\cG_2$ has no scalar linear
solution over $GF(q)$ when $q$ is odd, and does have a scalar linear
solution over $GF(2^h)$.

We bring this network as an example for a network that has no exact
real solution. An approximate solution which was found using a computer
search is detailed below:
\begin{align*}
\alpha_{e_1\to e_5} & = 0.0332528  & \alpha_{e_2\to e_5} & = -11.8712 \\
\alpha_{e_3\to e_6} & = 16.3384 & \alpha_{e_4\to e_6} & = 2.69746 \\
\alpha_{e_7\to e_{11}} & = 2.79007 & \alpha_{e_8\to e_{11}} & = 2.02721 \\
\alpha_{e_9\to e_{12}} & = -1.16509 & \alpha_{e_{10}\to e_{12}} & = 2.28349 \\
\beta_1 & = -0.0169705 & \beta_2 & = 0.182872\\
\beta_3 & = -0.030174 & \beta_4 & = 0.0722992\\
\beta_5 & = -25.8106 & \beta_6 & = 21.8495
\end{align*}
where the $\beta$'s are the coefficients used at the terminals to
recover the original messages, and are written next to the edge they
apply to. All the nodes with a single incoming edge simply repeat the
incoming message on all outgoing edges.

The approximation factor turn out to be
\[\gamma=0.00572545,\]
which, by Theorem \ref{th:prec}, allows us to use integers in the
range $[-87,87]$. Since we use base $2$ in this example, we use the
range $[-64,63]$, whose integers may be expressed using $7$ bits.

The other parameters involved in this network are maximum in-degree
$\dinmax=2$, maximum coefficient magnitude $\alpha=16.3384$, and base
$b=2$. Using Theorem \ref{th:prec} again, we find the $P\geq 18$ and
$p\geq 8$ suffice.

However, these estimates for $P$ and $p$ are far from being tight. We
first note that, along any path from source to terminal, there are only
two non-terminal nodes that perform a linear combination. Thus, the
``effective'' depth is only $3$ in this case. Furthermore, by using
the exact values of the various $\alpha_{e\to e'}$, instead of the upper
bound $\alpha$, we can obtain a tighter bound on the maximal value passing
over an edge, and the maximal error $\epsilon_{d-1}$. In this case, after
a simple processing by a computer, these give $P\geq 14$ and $p\geq 6$.

It follows that the network $\cG_2$, using the quasi-linear network
coding system, is capable of transmitting source message of length $7$
bits, using messages of length $14+6=20$ bits, i.e., with rate $7/20 >
1/3$.
\end{example}

\begin{figure}
\begin{center}
\psfrag{m1}{$m_1$}
\psfrag{m2}{$m_2$}
\psfrag{m3}{$m_3$}
\psfrag{ss}{$s$}
\psfrag{b1}{\tiny $\beta_1$}
\psfrag{b2}{\tiny $\beta_2$}
\psfrag{b3}{\tiny $\beta_3$}
\psfrag{b4}{\tiny $\beta_4$}
\psfrag{b5}{\tiny $\beta_5$}
\psfrag{b6}{\tiny $\beta_6$}
\psfrag{e1}{\tiny $e_1$}
\psfrag{e2}{\tiny $e_2$}
\psfrag{e3}{\tiny $e_3$}
\psfrag{e4}{\tiny $e_4$}
\psfrag{e5}{\tiny $e_5$}
\psfrag{e6}{\tiny $e_6$}
\psfrag{e7}{\tiny $e_7$}
\psfrag{e8}{\tiny $e_8$}
\psfrag{e9}{\tiny $e_9$}
\psfrag{ea}{\tiny $e_{10}$}
\psfrag{eb}{\tiny $e_{11}$}
\psfrag{ec}{\tiny $e_{12}$}
\includegraphics[scale=0.7]{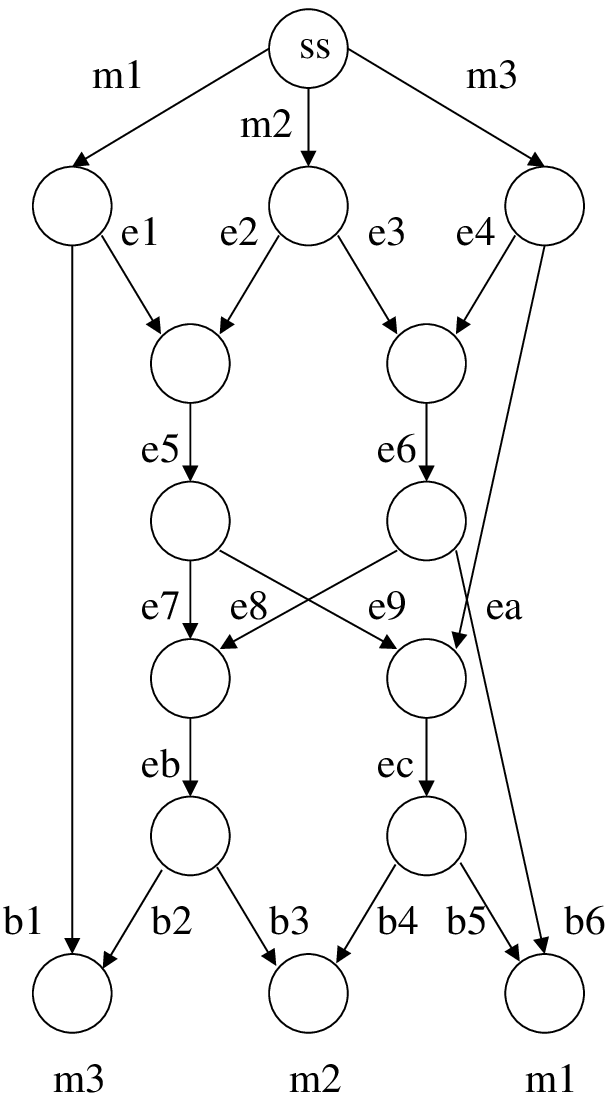}
\end{center}
\caption{The network $\cG_2$.}
\label{fig:g2}
\end{figure}

\begin{example}
As a final example we bring the network $\cG_3$, shown in Figure
\ref{fig:g3}, which is a combination of the networks $\cG_1$ and
$\cG_2$. This combination was shown to have no scalar linear solution
in \cite{DouFreZeg05}, though an ad-hoc non-linear solution was given.

We can apply the quasi-linear network-coding scheme to this network,
and since the two sub-networks operate separately, get a solution with
rate $7/20> 1/3$. We can compare this with the best routing solution
for this network which has a lower rate of $1/3$ (see
\cite{CanDouFreZeg06}).

We are not forced in any way to use the quasi-linear scheme for the
entire network. By combining a routing solution for $\cG_2$ with rate
$2/3$ (see \cite{CanDouFreZeg06}), with a quasi-linear solution for
$\cG_1$ with rate $\frac{n}{n+2}$, for any $n$, we can obtain an
overall solution with rate $2/3$.
\end{example}

\begin{figure*}
\begin{center}
\psfrag{m1}{$m_1$}
\psfrag{m2}{$m_2$}
\psfrag{m3}{$m_3$}
\psfrag{m4}{$m_4$}
\psfrag{m5}{$m_5$}
\psfrag{ss}{$s$}
\includegraphics[scale=0.7]{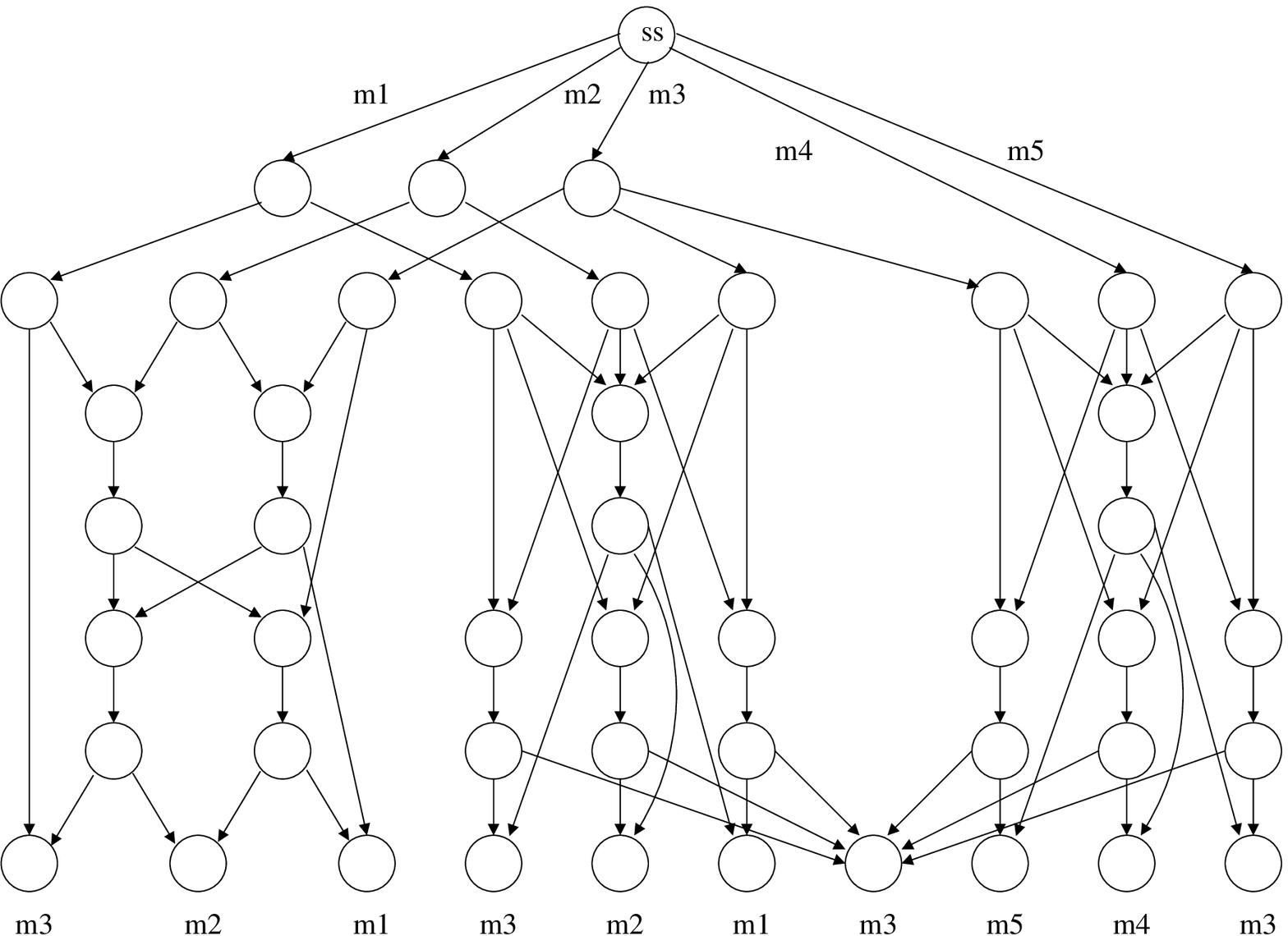}
\end{center}
\caption{The network $\cG_3$.}
\label{fig:g3}
\end{figure*}

\section{Conclusion}
\label{sec:conc}

In this paper we described quasi-linear network coding, which is a
two-phase heuristic for designing vector non-linear network codes for
non-multicast networks. In the first stage an approximate solution
over the reals is found, and in the second stage it is quantized to a
fixed-point representation. We analyzed the sources for errors in the
process and determined sufficient conditions for zero-error at the
terminals. These condition determine the rate of the solution.

We applied the method to the network presented in \cite{DouFreZeg05}
to prove the insufficiency of linear network codes. While overall the
rate was below that of the ad-hoc non-linear solution given in
\cite{DouFreZeg05}, our method is systematic, and it out-performs the
routing capacity of the network.

Connections between the rate of the quasi-linear network code, and
various parameters of the network, e.g., depth and incoming degree,
were established.  However, a crucial piece is still missing, and that
is connecting the approximation factor $\gamma$ with the network. This
missing link will enable us to fully compare quasi-linear network
codes with other coding techniques.

\section*{Acknowledgments}

The first author would like to thank the second author for hosting him
at MIT during his sabbatical.

\bibliographystyle{IEEEtranS}
\bibliography{allbib}

\end{document}